%% file: sample-sigconf.tex
\documentclass[sigconf]{acmart}
\usepackage{enumitem}
\usepackage{hyperref}
\usepackage{multirow}
\usepackage{amsthm}
\usepackage{array}
\usepackage{subfigure}
\usepackage{graphicx}
\setlist[enumerate]{align=left}
\usepackage{algpseudocode}
\usepackage{algorithm}
\usepackage{balance}
\usepackage{amsthm}
\newtheorem{theorem}{Theorem}
\newtheorem{assumption}{Assumption}
\usepackage{subcaption}
\usepackage[multiple]{footmisc}

\AtBeginDocument{%
  }

\copyrightyear{2026}
\acmYear{2026}
\setcopyright{cc}
\setcctype{by}
\acmConference[SIGIR '26]{Proceedings of the 49th International ACM SIGIR Conference on Research and Development in Information Retrieval}{July 20--24, 2026}{Melbourne, VIC, Australia}
\acmBooktitle{Proceedings of the 49th International ACM SIGIR Conference on Research and Development in Information Retrieval (SIGIR '26), July 20--24, 2026, Melbourne, VIC, Australia}
\acmDOI{10.1145/3805712.3809654}
\acmISBN{979-8-4007-2599-9/2026/07}

\begin{document}

\title{FedMM: Federated Collaborative Signal Quantization for Multi-Market CTR Prediction}

\author{Jun Zhang}
\authornote{Both authors contributed equally to this research.}
\affiliation{%
  \institution{College of Computer Science and Software Engineering, Shenzhen University}
  \institution{Shenzhen Technology University}
  \city{Shenzhen}
  \country{China}
}
\email{2410104014@mails.szu.edu.cn}

\author{Dugang Liu}
\authornote{Corresponding authors.}
\affiliation{%
  \institution{College of Computer Science and Software Engineering, Shenzhen University}
  \city{Shenzhen}
  \country{China}}  
\email{dugang.ldg@gmail.com}

\author{Xing Tang}
\authornotemark[1]
\affiliation{%
  \institution{Shenzhen Technology University}
  \city{Shenzhen}
  \country{China}}  
\email{xing.tang@hotmail.com}

\author{Xiuqiang He}
\authornotemark[2]
\affiliation{%
  \institution{Shenzhen Technology University}
  \city{Shenzhen}
  \country{China}}  
\email{hexiuqiang@sztu.edu.cn}

\author{Zhong Ming}
\affiliation{%
  \institution{Shenzhen University}
  \city{Shenzhen}
  \country{China}}  
\email{mingz@szu.edu.cn}
\renewcommand{\shortauthors}{Jun Zhang et al.}

\input{section/abstract}

\begin{CCSXML}
<ccs2012>
   <concept>
       <concept_id>10002951.10003317.10003347.10003350</concept_id>
       <concept_desc>Information systems~Recommender systems</concept_desc>
       <concept_significance>500</concept_significance>
       </concept>
   <concept>
       <concept_id>10002978.10003029.10011150</concept_id>
       <concept_desc>Security and privacy~Privacy protections</concept_desc>
       <concept_significance>500</concept_significance>
       </concept>
 </ccs2012>
\end{CCSXML}

\ccsdesc[500]{Information systems~Recommender systems}
\ccsdesc[500]{Security and privacy~Privacy protections}

\keywords{Signal quantization, Click-through rate prediction, Multi-market, Privacy-preserving}

\maketitle

\input{section/introduction}
\input{section/relatedwork}
\input{section/formulation}
\input{section/theoretica}
\input{section/method}
\input{section/experiments}

\input{section/conclusion}
\input{section/acknowledgement}

\bibliographystyle{ACM-Reference-Format}
\balance
\bibliography{sample-base}

\end{document}

%% file: section/abstract.tex
\begin{abstract}
Online platforms such as Amazon and Netflix serve users across multiple countries and regions, underscoring the importance of multi-market recommendation (MMR). Most MMR methods adopt a pre-training and fine-tuning paradigm, in which a unified model is first trained on centralized, global data and subsequently adapted to specific markets. However, this approach ignores the privacy of market data. While traditional federated learning preserves privacy, it typically aims to obtain a global model by aggregating model parameters and does not account for significant market heterogeneity. Additionally, because ID spaces are disjoint across markets, embedding-based aggregation strategies become ineffective. 
To overcome these challenges, we propose a federated collaborative signal quantization (FedMM) method for multi-market click-through rate (CTR) prediction. Our core idea leverages a discrete codebook mechanism to achieve privacy-preserving transmission and align disjoint ID spaces. We further employ a hierarchical codebook structure to capture cross-market shared patterns and market-specific characteristics. Specifically, we deploy a residual quantized variational autoencoder (RQ-VAE) with a dual-layer codebook mechanism for each market to quantize collaborative embeddings. The first layer utilizes a global federated codebook, updated via aggregation to capture universally shared collaborative patterns, while the second layer maintains a local codebook to learn market-specific semantics. Finally, the learned discrete codes, which integrate both general and specific collaborative signals, are incorporated into downstream CTR models to enhance prediction accuracy across all markets. Extensive experiments on benchmark datasets demonstrate that FedMM significantly improves recommendation performance with privacy guarantees.
\end{abstract}

%% file: section/introduction.tex
\section{Introduction}\label{sec:intro}
With the rapid globalization of the digital economy, online platforms such as Amazon~\cite{amazon}, Netflix~\cite{netflix}, and TikTok are expanding their services across borders. As shown in Figure~\ref{fig:intro}, these platforms serve diverse user groups across countries and regions (referred to as markets), with users exhibiting distinct cultural backgrounds and consumption preferences. To meet the personalized needs of users across different markets and improve user experience and platform revenue, it is essential to build a recommendation system that adapts to multiple markets. Consequently, Multi-market recommendation (MMR) has emerged as a key research direction~\cite{xmarket}. MMR deploys a unified framework to serve diverse user groups distributed across geographically distinct regions. By modeling the complex dependencies and heterogeneity across markets, MMR aims to deliver accurate and personalized content to users in each specific market.

\begin{figure}[htbp] 
\centering
    \includegraphics[width=1.0\linewidth]{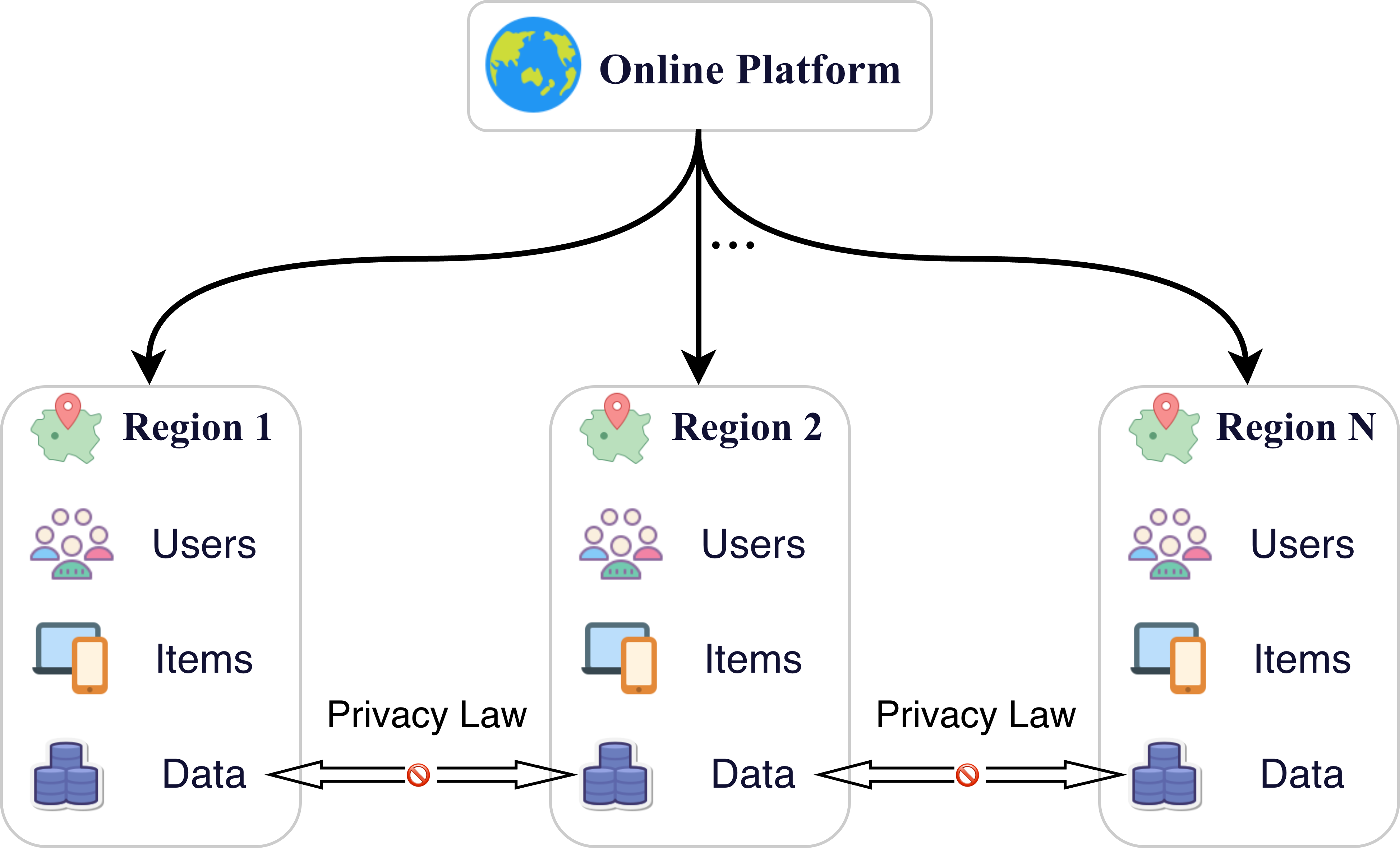}
    \Description[<Figure 1. Fully described in the text.>]{<A full description of Figure 1 can be found in the first and second paragraphs of Section 1.>}
\caption{A typical multi-market recommendation scenario. The online platform operates across multiple distinct regions. Due to strict privacy laws and data sovereignty regulations, the data are isolated within each region.}
\label{fig:intro}
\end{figure}
Existing MMR methods~\cite{xmarket,m3rec,ma,bert4xmr,MultiFS} mainly rely on centralized training. These approaches typically train a unified market-agnostic model on centralized data from all markets to capture shared knowledge, which is subsequently adapted to specific markets. Fundamentally, these works rely on the assumption that platforms can freely aggregate data from all markets into a central repository. However, this assumption is becoming invalid. As shown in Figure~\ref{fig:intro}, with the growing global emphasis on digital sovereignty and user privacy, numerous countries have enacted stringent laws and regulations, such as the GDPR\footnote{\url{https://gdpr-info.eu}}, to strictly limit the cross-border flow of personal data. Consequently, aggregating raw user behavior data into a centralized server for unified training is severely restricted, rendering traditional centralized MMR approaches infeasible in practice.

Federated learning has emerged as a promising approach to address privacy concerns in recommender systems~\cite{fed1,fed2,fed3}. By transmitting and aggregating model parameters while keeping sensitive user behavior data locally, clients can leverage shared knowledge to achieve superior generalization performance~\cite{fedavg,fedprox,fednova,fedcf}. However, applying this paradigm to multi-market settings presents unique challenges. Users across countries exhibit significant heterogeneity in cultural backgrounds and consumption habits, resulting in distinct patterns of interest. Simply merging these distinct market-specific patterns into a global model often leads to negative transfer, in which the unique characteristics of a given market are lost. Instead of mutually enhancing performance, such a ``one-size-fits-all'' aggregation mechanism introduces noise and conflicting signals into the global model parameters, corrupting the shared knowledge. Furthermore, embedding-based aggregation approaches~\cite{fedncf,fedmf,feddual,fedcia} are difficult to implement in practice. Due to disjoint ID spaces across markets, there is no unified mapping mechanism for items across regions. Consequently, directly aggregating embedding parameters is ineffective, as the model cannot align the item representations without a shared ID space.

To address these issues, we propose a novel federated multi-market click-through rate (CTR) prediction framework, named FedMM. Our method fuses collaborative information from distributed markets to construct global collaborative signals, which are then quantized into tokens and used as additional discrete features to augment each market's CTR performance. Specifically, the workflow operates as follows: each market first initializes a local collaborative filtering (CF) model (e.g., LightGCN~\cite{lightgcn}) to extract latent feature representations of users and items. Subsequently, we employ a dual-layer residual quantized variational autoencoder (RQ-VAE)~\cite{RQVAE} to quantize these local representations. The first-layer codebook is constructed in a federated manner by aggregating across all participating markets to capture global collaborative signals (shared patterns). The second layer codebook is learned locally to preserve market-specific characteristics, thereby encoding local collaborative signals. Upon completion of this quantization process, we obtain a sequence of semantic tokens enriched with hierarchical collaborative information. These tokens are then treated as discrete features and fed into the downstream CTR model. Training on these semantically meaningful tokens facilitates explicit interactions between global commonalities and local specificities, providing the CTR model with richer context. From a privacy perspective, the first-layer codebooks exchanged during the federated process function as semantic prototypes. This effectively obfuscates individual identities by sharing only abstract feature patterns, thereby ensuring robust privacy protection.

To summarize, this work makes the following main contributions:
\begin{itemize}[leftmargin=*]
    \item We propose FedMM, a privacy-preserving framework that introduces federated collaborative signal quantization to MMR. It is the first work to use the codebook as a medium for enabling privacy-preserving transmission and aligning disjoint ID spaces across markets.
    \item We design a coarse-to-fine quantization strategy based on a two-layer RQ-VAE to inject hierarchical global-local collaborative signals into the CTR model.
    \item FedMM provides a scalable solution to serve multiple markets simultaneously, avoiding the need for source-target pairing.
    \item We conduct extensive experiments on multiple real-world datasets involving varying numbers of markets (ranging from 2 to 4). Experimental results across various backbone models demonstrate the superiority of the proposed FedMM framework.
\end{itemize}

%% file: section/relatedwork.tex
\section{Related Work}\label{sec:related}
This section provides a brief overview of related work on two key research topics: multi-market recommendation and federated learning.

\subsection{Multi-Market Recommendation}
Multi-market recommendation (MMR) aims to leverage cross-market knowledge to enhance overall recommendation performance. The Meta-Learning paradigm has been widely adopted to adapt to distinct market distributions. For instance, FOREC~\cite{xmarket} utilizes model-agnostic meta-learning (MAML) to learn globally shared initialization parameters, which are subsequently fine-tuned with market-specific layers. Following this, hybrid strategies~\cite{mamlcf} have been proposed to combine MAML with traditional collaborative filtering to address underfitting for outlier users.
In parallel, some approaches focus on aligning embedding spaces. Methods such as MA~\cite{ma} and Bert4XMR~\cite{bert4xmr} introduce market-specific embeddings to map universal item representations into market-specific vector spaces. Additionally, M\(^3\)Rec~\cite{m3rec} employs random walk algorithms to capture high-order correlations across different markets.
Despite these advancements, the aforementioned methods generally overlook data privacy protection in markets. To address this, CAT-SR \cite{catsr} attempts to preserve privacy by transferring self-attention module weights. However, CAT-SR relies on the assumption of a data-rich source market to distill knowledge and is restricted to pairwise transfer. This limitation makes it difficult to implement in realistic multi-market scenarios.

\subsection{Federated Learning}
Federated learning (FL) is a distributed machine learning paradigm that enables collaborative model training across decentralized clients (e.g., edge devices or local servers) while keeping raw data localized to ensure privacy compliance~\cite{fedbat,panaceas,masked}. Typically, FL relies on aggregating model parameters to distill knowledge, aiming to learn a global model with strong generalization capabilities~\cite{fedavg}.
However, statistical heterogeneity across clients often leads to ``client drift.'' To mitigate this, FedProx~\cite{fedprox} and SCAFFOLD~\cite{scaffold} introduce proximal terms or control variates, respectively, to correct local updates. To further adapt to diverse local distributions, personalized federated learning has been proposed. For instance, FedAMP~\cite{fedamp} employs an attention mechanism to measure correlations among clients, enabling the aggregation of personalized parameters. Similarly, FedPer~\cite{fedper} achieves client-level personalization by decoupling the model into shared global representation layers and personalized local heads.
Although these FL methodologies can be applied to privacy-preserving multi-market recommendation (MMR), they often struggle in practice. Due to severe market heterogeneity, directly deploying these methods often leads to negative transfer, in which conflicting knowledge from distinct markets degrades local recommendation performance.

%% file: section/formulation.tex
\section{Problem Formulation}\label{sec:problem}
Let \( \mathcal{M}=\{M_1,M_2,\dots,M_K\} \) denote a set of \(K\) distinct markets (e.g., countries or regions) served by the global online platform. For the \(k\)-th market \(M_k\), the local dataset is denoted as \( \mathcal{D}_k=\{(u,v,y,\mathbf{x}) \mid u\in \mathcal{U}_k,v\in \mathcal{V}_k\} \), where \(\mathcal{U}_k\) and \(\mathcal{V}_k\) represent the sets of users and items in market \(M_k\), respectively. Each instance consists of a user \(u\), an item \(v\), a binary label \( y \in \{ 0,1 \} \) indicating whether a click event occurred, and a feature vector \(\mathbf{x}\) (containing user profiles and item attributes).

\noindent\textbf{Privacy Constraint.} Due to data sovereignty regulations and privacy concerns, raw data \(\mathcal{D}_k\) is stored on the local server node of market \(M_k\) and cannot be transferred to a central server or other markets. Only model parameters, gradients, or abstract representations are allowed to be exchanged under privacy-preserving protocols (e.g., local differential privacy).

\noindent\textbf{Disjoint ID Spaces.} We assume that the user populations across different markets are strictly non-overlapping. Regarding items, although partial overlaps may exist in the real-world merchandise (e.g., global products sold in multiple regions), their identifiers are encoded independently within each market. This means the same product corresponds to distinct, unrelated ID tokens in different markets. Formally, we treat both user and item ID spaces as disjoint across markets. This setup excludes simple ID-matching or embedding sharing strategies, as there are no common identifiers to directly bridge the diverse markets.

\noindent\textbf{Learning Objective.} The goal of the multi-market CTR prediction task is to collaboratively train a global model or a set of enhanced local models without sharing raw data, such that the prediction performance in each market is maximized. Specifically, for each instance in market \(M_k\), we aim to learn a function \(f_\Theta(\cdot)\) to predict the probability of a user clicking on an item:
\begin{equation}\label{eq:1}
\begin{aligned}
    \hat{y}_{uv} = f_\Theta(u,v,\mathbf{x}),
\end{aligned}
\end{equation}
where \(\Theta\) denotes the learnable parameters. The overall objective is to minimize the log-loss (binary cross-entropy) across all \(K\) markets:
\begin{equation}\label{eq:2}
\begin{aligned}
    \mathcal{L} = -\sum_{k=1}^{K}\sum_{(u,v,y,\mathbf{x}) \in \mathcal{D}_k} \left[y\text{log}(\hat{y}_{uv}) + (1-y)\text{log}(1-\hat{y}_{uv})\right].
\end{aligned}
\end{equation}

%% file: section/theoretica.tex
\section{Theoretical Analysis}\label{sec:theory}
In this section, we provide a theoretical analysis for the validity of aggregating locally updated codebooks, demonstrating that our FedMM paradigm approximates quantization on the global dataset.
\begin{assumption}
    All clients are initialized with the same global codebook. This means the $j$-th codeword initially represents the same semantic cluster across all local clients.
\end{assumption}
\begin{theorem}
    The aggregated codebook \(\mathcal{C}_{\text{fed}}\) derived from locally updated codebooks is mathematically equivalent to a codebook updated on the union of all local datasets.
\end{theorem}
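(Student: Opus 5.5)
The plan is to make the codebook update explicit and then show that federated aggregation reproduces, term by term, the update computed on the pooled data. I will model one round of first-layer codebook learning as a k-means / EMA-style step, which is the standard codebook update in VQ-VAE and RQ-VAE. For client $k$ with local embeddings $\{\mathbf{z}_i\}_{i\in\mathcal{D}_k}$ and current codebook $\mathcal{C}=\{\mathbf{c}_1,\dots,\mathbf{c}_J\}$, the assignment set is $S_j^k=\{i\in\mathcal{D}_k:\arg\min_{l}\|\mathbf{z}_i-\mathbf{c}_l\|=j\}$. The local count is $n_j^k=|S_j^k|$ and the local updated codeword is $\mathbf{c}_j^k=\frac{1}{n_j^k}\sum_{i\in S_j^k}\mathbf{z}_i$. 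The aggregation rule I will analyze is the count-weighted average
\begin{equation*}
\mathbf{c}_j^{\text{fed}}=\frac{\sum_{k=1}^K n_j^k\,\mathbf{c}_j^k}{\sum_{k=1}^K n_j^k},
\end{equation*}
which I expect matches (or should be stated as) the method's aggregation. Clients with $n_j^k=0$ simply contribute nothing to codeword $j$.

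\textbf{Step 1 (assignments).} Using the Assumption, every client starts the round from the same global codebook $\mathcal{C}$. Nearest-codeword assignment depends only on $\mathbf{z}_i$ and $\mathcal{C}$, not on which client holds $\mathbf{z}_i$. So the global assignment set on $\bigcup_k\mathcal{D}_k$ decomposes as a disjoint union, $S_j=\bigsqcup_k S_j^k$; disjointness follows from the disjoint ID spaces in the problem formulation. This is where the Assumption is essential: without a common initialization, index $j$ would not denote the same cluster across clients, and the union decomposition would be meaningless.

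\textbf{Step 2 (centroid identity).} Substituting $n_j^k\mathbf{c}_j^k=\sum_{i\in S_j^k}\mathbf{z}_i$ into the aggregation gives
\begin{equation*}
\mathbf{c}_j^{\text{fed}}=\frac{\sum_k\sum_{i\in S_j^k}\mathbf{z}_i}{\sum_k n_j^k}=\frac{1}{|S_j|}\sum_{i\in S_j}\mathbf{z}_i.
\end{equation*}
The right-hand side is exactly the centroid update on the pooled data. For the EMA variant I will show the same thing for the sufficient statistics: the counts $N_j\leftarrow\gamma N_j+(1-\gamma)n_j$ and the sums $\mathbf{m}_j\leftarrow\gamma\mathbf{m}_j+(1-\gamma)\sum_{i\in S_j}\mathbf{z}_i$ are additive over clients, so summing the local statistics recovers the global ones. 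For a gradient-based codebook loss $\sum_i\|\mathrm{sg}(\mathbf{z}_i)-\mathbf{c}_{j(i)}\|^2$, the gradient with respect to $\mathbf{c}_j$ is likewise a sum over $S_j$. A sample-size-weighted FedAvg of one local gradient step therefore equals one global gradient step.

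\textbf{Main obstacle.} Exact equivalence holds only for a single synchronous round, or for a single local step in the gradient case. Once clients take several local iterations, their assignments are computed against diverging local codebooks, and Step 1 breaks. I will handle this by stating the theorem per communication round with local epochs restricted to one assignment-update pass; this is the regime where ``mathematically equivalent'' is literally true. For multiple local steps I will remark that the deviation is controlled by the local drift, so the federated result only approximates the global quantization, which is consistent with the section's framing that FedMM ``approximates'' global quantization. A secondary caveat I will note is that the embeddings $\mathbf{z}_i$ are treated as fixed, stop-gradient inputs during the codebook update. If the local encoders move between rounds, the claim concerns the codebook update given the current embeddings.
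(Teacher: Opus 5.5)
Your argument is correct, but it takes a different and more explicit route than the paper. The paper's proof is pure gradient additivity. It writes the Euclidean quantization loss on the pooled set as the sum over clients of the local losses and concludes $\nabla_{\mathbf{c}_j}\mathcal{L}_{global}=\sum_{k}\nabla_{\mathbf{c}_j}\mathcal{L}_{local}^k$. The local gradients are evaluated at the same codebook and the same assignments $q(\cdot)$, which is where the Assumption implicitly enters.

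Your main line instead models the update as a Lloyd/EMA centroid step. You then prove equality of the updated codewords themselves, through the count-weighted average $\sum_k n_j^k\mathbf{c}_j^k/\sum_k n_j^k$, and treat gradient additivity only as a side remark.

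The paper's route is shorter and extends to any codebook loss that is a sum over samples. This includes, for a single step, the cosine/softmax loss of Eq.~\eqref{eq:8}. Your closed-form centroid identity does not apply to that loss, though your gradient remark does.

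Your route buys rigor on exactly the points the paper skips. The paper never says which aggregation of uploaded codewords realizes the sum of gradients, or with what weights or normalization. It also never states that the identity holds only for one synchronous step from a common codebook. You make explicit that the centroid version needs per-codeword counts $n_j^k$ uploaded alongside the codewords, which the protocol in Algorithm~\ref{alg:FedMM} does not send. You also note that multiple local epochs and encoder updates reduce the claim to an approximation; the LDP noise added before upload does the same, and you could add it to that list.

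One small point: the sets $S_j^k$ are disjoint across clients by construction, since the pooled data is the disjoint union of the indexed client samples. You therefore do not need to invoke the disjoint-ID assumption there.
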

\begin{proof}\let\qed\relax
We formulate the quantization process as finding a set of discrete vectors (i.e., a codebook), denoted as \(\mathcal{C}=\{\mathbf{c}_j\}_{j=1}^T\), that minimizes the Euclidean distance between input vectors and their nearest neighbor codewords. Let \(\mathcal{E}\) be the global set of vectors to be quantized, comprising \(K\) subsets, \(\mathcal{E}_1,\dots,\mathcal{E}_K\), distributed across \(K\) clients. The global optimization objective is to minimize the quantization error:
\begin{equation}\label{eq:3}
    \min_{\mathcal{C}}\mathcal{L}_{global}=\sum_{k=1}^K\sum_{\mathbf{e}\in\mathcal{E}_k}\lVert\mathbf{e}-q(\mathbf{e};\mathcal{C})\rVert^2,
\end{equation}
where \(q(\mathbf{e};\mathcal{C})=\arg\min_{\mathbf{c}\in\mathcal{C}}\lVert\mathbf{e}-\mathbf{c}\rVert^2\) denotes the nearest neighbor search operation. Suppose we employ gradient descent to update the codebook. For a specific codeword \(\mathbf{c}_j\), its gradient with respect to the global loss function is given by:
\begin{align}\label{eq:4}
    \nabla_{\mathbf{c}_j}\mathcal{L}_{global} & = \sum_{k=1}^K\sum_{\mathbf{e}\in\mathcal{E}_k,q(\mathbf{e})=\mathbf{c}_j}-2(\mathbf{e}-\mathbf{c}_j) \nonumber \\
      &= \sum_{k=1}^K\sum_{\mathbf{e}\in\mathcal{E}_k,q(\mathbf{e})=\mathbf{c}_j}2(\mathbf{c}_j-\mathbf{e}).
\end{align}
Let \(\mathcal{L}_{local}^k\) denote the local loss function of the \(k\)-th client. Similarly, the gradient of the local quantization loss is:
\begin{equation}\label{eq:5}
\nabla_{\mathbf{c}_j}\mathcal{L}_{local}^k=\sum_{\mathbf{e}\in\mathcal{E}_k,q(\mathbf{e})=\mathbf{c}_j}2(\mathbf{c}_j-\mathbf{e}).
\end{equation}
Leveraging the linearity of gradients, we obtain:
\begin{equation}\label{eq:6}
\nabla_{\mathbf{c}_j}\mathcal{L}_{global}=\sum_{k=1}^K\nabla_{\mathbf{c}_j}\mathcal{L}_{local}^k.
\end{equation}
This implies that the gradient used to optimize the global objective equals the sum of all local gradients. This mathematical equivalence provides the theoretical foundation for our federated codebook aggregation framework, demonstrating how it captures and integrates shared collaborative patterns across diverse clients. 
\end{proof}

%% file: section/method.tex
\section{The Proposed Framework}\label{sec:method}
In this section, we elaborate on the proposed federated collaborative signal quantization framework for multi-market CTR prediction, which is called FedMM. The overall architecture of FedMM is illustrated in Figure~\ref{fig:framework}. The presentation proceeds as follows: first, we describe the local collaborative signal learning process; second, we introduce the codebook-based federated aggregation paradigm to capture cross-market global signals; and finally, we describe the utilization of collaborative tokens for local CTR enhancement and the optimization objectives.
\begin{figure*}[htbp]
    \centering
    \includegraphics[width=1.\textwidth]{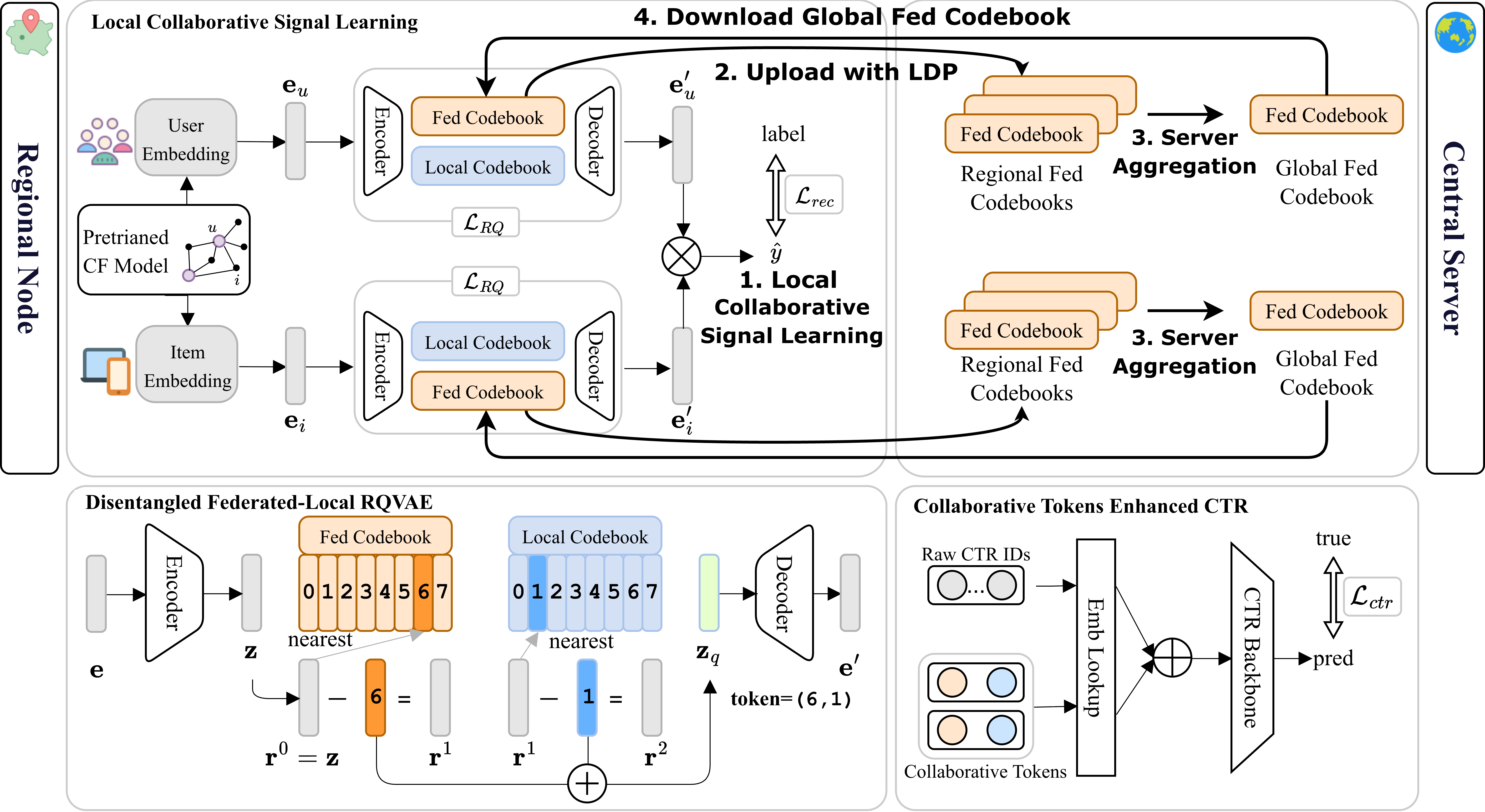}
    \Description[<Figure 2. Fully described in the text.>]{<A full description of Figure 2 can be found in Section 5.>}
    \caption{Overview of our FedMM framework. The framework comprises three key components: (1) Local Signal Learning: regional nodes learn to disentangle user/item representations using the disentangled federated-local RQ-VAE (zoomed in bottom-left). (2) Federated Aggregation: regional nodes upload federated codebooks protected by local differential privacy (LDP), which are then aggregated by the central server to synchronize global codebooks (top-right). (3) CTR Enhancement: in the downstream task (bottom-right), the discrete collaborative tokens are employed to enhance the CTR backbone model.}
    \label{fig:framework}
\end{figure*}
\subsection{Local Collaborative Signal Learning}\label{subsec:local}
This subsection details the generation of local collaborative tokens. We proceed under the assumption that each regional node has already acquired collaborative embeddings for users and items via a pre-trained collaborative filtering (CF) model. For instance, using LightGCN~\cite{lightgcn} can capture collaborative signals by performing weighted aggregation over adjacent neighbors. Inspired by generative recommender systems~\cite{generative}, we employ a residual quantized variational autoencoder (RQ-VAE)~\cite{RQVAE} to quantize these embeddings into discrete collaborative tokens. This mechanism injects hierarchical collaborative information into the tokens, thereby facilitating enhanced CTR prediction performance.

RQ-VAE serves as a multi-stage quantizer, employing residual quantization to generate fine-grained discrete representations. An RQ-VAE with \(m\) levels consists of \(m\) ordered codebooks. For any given input vector, the quantization process yields a discrete code sequence of length \(m\), where each entry represents the index of the selected codeword in the corresponding codebook. 
In the FedMM framework, we deploy two parallel RQ-VAE streams to process user and item embeddings independently. The hierarchical nature of RQ-VAE is inherently suited for multi-granularity quantization. 
Specifically, the first-layer user codebook \(\mathcal{C}_{\text{fed}}^U\) (and analogously \(\mathcal{C}_{\text{fed}}^I\) for item) is constructed via federated aggregation from all regional nodes to capture coarse-grained global collaborative signals, while the second-layer local codebook \(\mathcal{C}_{\text{local}}^U\) is learned locally to capture fine-grained local collaborative signals.

In the following, we detail the quantization process using user collaborative embeddings as an example, noting that the procedure for item embeddings is analogous. Given a user collaborative embedding \(\mathbf{e}_u\), it is first fed into an encoder to be mapped into the latent space, yielding a latent representation \(\mathbf{z}_u\). We treat \(\mathbf{z}_u\) as the $0$th-level residual, i.e. \(\mathbf{r}^0_u = \mathbf{z}_u\). Subsequently, \(\mathbf{r}_u^0\) is quantized by retrieving the nearest embedding \(\mathbf{c}^0_u\) from the size-\(T\) federated codebook \(\mathcal{C}_{\text{fed}}^U=\{\textbf{c}_{j}^0\}_{j=1}^T\),
\begin{equation}\label{eq:7}
    \begin{gathered}
    \mathbf{z}_u=E(\mathbf{e}_u),\\
    \mathbf{c}^0_u= \arg\max_{\mathbf{c}_j^0 \in \mathcal{C}_{\text{fed}}^U} \;P(\mathbf{c}_j^0 \mid \mathbf{r}^0_u),\;\text{where}\; P(\mathbf{c}_j^0\mid \mathbf{r}^0_u)=\frac{e^{\text{sim}(\mathbf{r}^0_u,\mathbf{c}_j^0)/\tau}}{\sum_{k=1}^T e^{\text{sim}(\mathbf{r}^0_u,\mathbf{c}_k^0)/\tau}},
    \end{gathered}
\end{equation}
where the \(E(\cdot)\) is the encoder in RQ-VAE, \(\text{sim}(\cdot)\) denotes the similarity metric, \(\tau\) is the temperature parameter. We utilize cosine similarity for quantization to maintain consistency with mainstream CF models (e.g., MF~\cite{MF}, NGCF~\cite{NGCF}, LightGCN~\cite{lightgcn}), which inherently rely on inner products to model interactions. This alignment ensures that the discrete codes reflect the CF backbone's optimization goals~\cite{cogcl}. Furthermore, compared with Euclidean distance, cosine similarity is robust to variations in vector magnitude. This property is crucial for eliminating activity bias, enabling the quantization process to focus on capturing users' intrinsic semantic preferences.

After obtaining the $1$st-level residual via \(\textbf{r}_u^1=\textbf{r}_u^0-\textbf{c}_u^0\), we employ the local codebook for its quantization, following a procedure analogous to Eq.~\eqref{eq:7} to retrieve the nearest embedding \(\mathbf{c}_u^1\) in local codebook \(\mathcal{C}_{\text{local}}^U=\{\textbf{c}_{j}^1\}_{j=1}^T\). We optimize the codebooks under the guidance of the loss \(\mathcal{L}_{code}^U\). This objective maximizes the likelihood of the input embedding given its corresponding optimal codeword:
\begin{equation}\label{eq:8}
    \mathcal{L}_{code}^U = -[\log P(\mathbf{c}_u^0\mid \mathbf{z_u^0}) + \log P(\mathbf{c}_u^1\mid \mathbf{z_u^1})].
\end{equation}
Analogously, the loss for item embeddings is \(\mathcal{L}_{code}^I\). The codebook learning loss for the framework is the aggregate of the user and item terms: \(\mathcal{L}_{code}=\mathcal{L}_{code}^U+\mathcal{L}_{code}^I\).
We obtain the quantized vector \(\mathbf{z}_u^q\) by summing the quantization outputs from both the federated and local layers, 
\begin{equation}\label{eq:9}
    \mathbf{z}_u^q = \mathbf{c}_u^0 + \mathbf{c}_u^1.
\end{equation}
To preserve the original collaborative semantics and prevent mode collapse, we introduce a semantic alignment loss \(\mathcal{L}_{al}\), which leverages the fixed input distribution as a semantic anchor:
\begin{equation}\label{eq:10}
    \mathcal{L}_{al}^U = -\log P(\mathbf{e}_u \mid \mathbf{z}_u^q), \;\text{where}\; P(\mathbf{e}_u\mid \mathbf{z}^q_u)=\frac{e^{\text{sim}(\mathbf{z}^q_u,\mathbf{e}_u)/\tau}}{\sum_{j=1}^B e^{\text{sim}(\mathbf{z}^q_u,\mathbf{e}_{u,j})/\tau}},
\end{equation}
where \(B\) denotes the batch size. To jointly optimize for codebook updates and semantic alignment, we integrate these terms into a unified quantization objective \(\mathcal{L}_{RQ}\):
\begin{equation}\label{eq:11}
    \mathcal{L}_{RQ} = \mathcal{L}_{code} + \mathcal{L}_{al}.
\end{equation}
Then, we input \(\mathbf{z}_u^q\) into the decoder  \(D(\cdot)\) to reconstruct the original collaborative embedding,
\begin{equation}\label{eq:12}
    \mathbf{e}_u'=D(\mathbf{z}_u^q).
\end{equation}

After obtaining the reconstructed embeddings for both the user and the item, we compute their inner product to evaluate the user-item co-occurrence patterns, optimizing the framework via the reconstruction loss \(\mathcal{L}_{rec}\):
\begin{equation}\label{eq:13}
    \begin{gathered}
    \hat{y}={\mathbf{e}_u'}^\top \mathbf{e}_i',\\
    \mathcal{L}_{rec} = -\left[y\text{log}(\hat{y}) + (1-y)\text{log}(1-\hat{y})\right].
    \end{gathered}
\end{equation}
The local learning process in FedMM is optimized by a composite objective function:
\begin{equation}\label{eq:14}
    \mathcal{L}_{total} = \mathcal{L}_{rec}+ \lambda\mathcal{L}_{RQ},
\end{equation}
where \(\lambda\) controls the weight of the loss term.

\subsection{Codebook-based Federated Collaborative Signal Aggregation}
\subsubsection{Client Training \& Upload.}
At the beginning of each federated round, every regional node executes \(E_\text{local}\) epochs of the local collaborative signal learning process (as detailed in Section~\ref{subsec:local}). Upon completion, the nodes upload both the user and item federated codebooks to the central server. Prior to transmission, each client applies a local differential privacy (LDP) mechanism to inject random perturbations into the codebooks, ensuring data privacy.
\subsubsection{Server Aggregation.}
Upon receiving the uploaded codebooks from all regional nodes, the central server executes an aggregation to generate the global federated codebooks. Specifically, the aggregation is performed in an index-aligned manner to preserve semantic consistency. For each index \(j\) in the codebook, the global codeword \(\mathbf{c}_{j}^0\) is computed by aggregating the corresponding set of local codewords \(\{\mathbf{c}_{j,k}^0\}_{k=1}^K\) from all \(K\) participating clients. This process integrates the collaborative signals from all distinct regions into a unified global representation. 
\subsubsection{Client Download \& Adaptation.}
After each regional node downloads and loads the updated global federated codebook, a discrepancy may arise between the new global priors and the existing local codebook. To facilitate the adaptation of the local codebook to these global signals, we introduce a local codebook alignment step, where we perform \(E_\text{adapt}\) rounds of adaptation on the local codebook. 

\subsubsection{Privacy Analysis.}
The privacy of FedMM is protected through a three-layer defense mechanism that leverages the characteristics of our layered quantization framework:
\begin{itemize}[leftmargin=*]
\item \textbf{Prototype-based Anonymization.} The transmission in FedMM is restricted to the Federated Codebook, which consists of discrete semantic prototypes rather than raw user data or dense gradients. Since each prototype represents a statistical aggregate of massive embeddings, it inherently obfuscates individual identities. This vector quantization process serves as a compression function, filtering out entity-specific identification details while preserving common semantic structures.
\item \textbf{Granularity Isolation.} We strictly isolate fine-grained user information by keeping the Local Codebook and residual representations on local devices. Specifically, only global patterns are exchanged. The residual vectors, which capture the unique and detailed characteristics of market features, are never exposed to the server. This architecture creates an information gap where the transmitted global prototypes serve only as rough approximations. Without access to the locally stored residuals, accurately recovering the original embeddings is practically impossible, thereby preventing the reconstruction of raw data.
\item \textbf{Local Differential Privacy.} To further immunize the system against inverse reconstruction attacks, we apply local differential privacy (LDP) via the Laplace perturbation mechanism. By introducing zero-mean Laplacian noise into the codebooks before uploading, we ensure that the global server cannot distinguish whether a specific update originated from any single user or item, providing a formal mathematical guarantee of privacy.
\end{itemize}

\subsection{Collaborative Tokens Enhanced CTR}
Upon the convergence of the federated training process, we proceed to the token generation phase. In this step, each regional node performs a single pass of quantization (in inference mode without gradient computation) on its local collaborative embeddings to generate the corresponding collaborative tokens \(\mathbf{Z}^k\). They act as discrete carriers of hierarchical collaborative signals. We integrate these tokens into the downstream CTR prediction task to augment the feature space. The complete FedMM training pipeline is detailed in Algorithm~\ref{alg:FedMM}.
\begin{algorithm}[htbp]
\caption{FedMM Training Pipeline}
\label{alg:FedMM}
\begin{algorithmic}[1]
\Statex \textbf{Phase 1: Federated Codebook Training}
\Statex \textbf{Input} Pre-trained collaborative embeddings $\mathcal{E}_k$ for each regional node $k$
\Statex \textbf{Parameters} Federated Rounds $R$, Local Epochs $E_{\text{local}}$, Adaptation Epochs $E_{\text{adapt}}$
\State Initialize global fed codebooks $\mathcal{C}_{\text{fed}}=\{\mathcal{C}_{\text{fed}}^U,\mathcal{C}_{\text{fed}}^I\}$
\For{each federated round $r = 1, 2, \dots, R$}
    \State Server broadcasts global codebook $\mathcal{C}_{\text{fed}}$ to all $K$ regional nodes
    \For{each regional node $k \in \{1, \dots, K\}$ \textbf{in parallel}}
        \For{each adaptation epoch $j = 1, \dots, E_{\text{adapt}}$}
            \State Update $\mathcal{C}_\text{local}^k$ to minimize the loss \(\mathcal{L}_{total}\)
        \EndFor
        \For{each local epoch $e = 1, \dots, E$}
            \State Train local quantizer by \(\mathcal{L}_{total}\)
        \EndFor
        \State Apply Local Differential Privacy: $\tilde{\mathcal{C}}_{\text{fed}}^{k} \gets \text{LDP}(\mathcal{C}_{\text{fed}}^{k}, \epsilon)$
        \State Upload $\tilde{\mathcal{C}}_{\text{fed}}^{k}$ to server
    \EndFor
    \State Server aggregates: $\mathcal{C}_{\text{fed}} \gets \mathrm{ServerAgg}\left( \{ \tilde{\mathcal{C}}_{\text{fed}}^{k} \}_{k=1}^K \right)$
\EndFor
\Statex \textbf{Phase 2: Local CTR Prediction Enhancement}
\Statex \textbf{Input} Local data $\mathcal{D}_k$ for each regional node $k$
\Statex \textbf{Parameters} CTR Model Training Epochs $E_\text{ctr}$
    \For{each regional node $k \in \{1, \dots, K\}$ \textbf{in parallel}}
        \State Freeze the learned codebooks $\mathcal{C}_{\text{fed}}$ and $\mathcal{C}_{\text{local}}^k$
        \State \textbf{Quantize} local collaborative embeddings to obtain collaborative tokens: 
        \State \quad $\mathbf{Z}_k \gets \text{Quantize}(\mathcal{E}_k; \mathcal{C}_{\text{fed}}, \mathcal{C}_\text{local}^k)$
        \For{each epoch $e = 1, \dots, E_{\text{ctr}}$}
            \For{each batch $(u,v,y,\mathbf{x})$ in local data $\mathcal{D}_k$}
                \State Retrieve collaborative tokens from $\mathbf{Z}_k$
                \State Train CTR model parameters $\Theta_k$ by minimizing $\mathcal{L}_{\text{ctr}}$
            \EndFor
        \EndFor
    \EndFor
\end{algorithmic}
\end{algorithm}

Specifically, for each user (or item), the generated token pair \((id_{\text{fed}},id_{\text{local}})\) is treated as two additional feature fields. We employ a learnable embedding layer to map these discrete indices into dense vectors, denoted as \(\mathbf{e}_{\text{fed}}\) and \(\mathbf{e}_{\text{local}}\). As illustrated in the bottom-right of Figure~\ref{fig:framework}, these token embeddings are concatenated with the embeddings of raw CTR features. The fused feature vector \(\mathbf{x}_{\text{fused}}\) is then fed into the CTR backbone model: 
\begin{equation}\label{eq:15}
    \hat{y}_{ctr} = f_{\Theta_k}(\mathbf{x_{\text{fused}}}),
\end{equation} 
where \(\Theta_k\) denotes the parameters of the CTR model for regional node \(k\). It is worth noting that the generated collaborative tokens are model-agnostic. These discrete signals serve as universal semantic features that can be integrated into any mainstream deep CTR architecture, such as DeepFM, DCN, or DNN, to enhance predictive performance. The final objective is to minimize the log-loss:
\begin{equation}\label{eq:16}
\begin{aligned}
    \mathcal{L}_{ctr} = -\sum_{(u,v,y,\mathbf{x}) \in \mathcal{D}_k} \left[y\text{log}(\hat{y}_{ctr}) + (1-y)\text{log}(1-\hat{y}_{ctr})\right].
\end{aligned}
\end{equation}

%% file: section/experiments.tex
\section{Experiments}\label{sec:experiments}
In this section, we conduct a series of experiments to answer the following four critical questions. The source code is publicly available at: \url{https://github.com/JunZhangJz/FedMM}.
\begin{itemize}[leftmargin=*]
    \item RQ1: How does our FedMM perform compared with the baselines?
    \item RQ2: What are the contributions of the key components in our FedMM?
    \item RQ3: What are the effects of different settings of key parameters?
    \item RQ4: What is the role of the learned federated codebook in FedMM?
\end{itemize}
\subsection{Experiment Setup}\label{sec:experiments:setup}
\subsubsection{Datasets.}
In this study, we utilize the XMarket\footnote{\url{https://xmrec.github.io/}} dataset, a large-scale cross-market e-commerce dataset collected by Bonab et al.~\cite{xmarket} via crawling 18 distinct Amazon marketplaces spanning 11 languages. To simulate multi-market recommendation scenarios of varying scales and complexities, we constructed three datasets from XMarket comprising 2, 3, and 4 markets, respectively (denoted as 2-market, 3-market, and 4-market). Detailed statistics of these datasets are summarized in Table~\ref{tab:datasets}.

\begin{table}[htbp]
\centering
\caption{Dataset statistics.}
\resizebox{0.48\textwidth}{!}{
\begin{tabular}{c|cc|ccc|cccc}
\specialrule{0.1em}{1pt}{1pt}
    \multicolumn{1}{c|}{\textbf{Dataset}} & \multicolumn{2}{c|}{\textbf{\textsf{2-Market}}} & \multicolumn{3}{c|}{\textbf{\textsf{3-Market}}} & \multicolumn{4}{c}{\textbf{\textsf{4-Market}}} \\
    \hline
    \multicolumn{1}{c|}{\textbf{Market}} & uk & us  & es & jp & mx & ca & de & fr & it \\
    \specialrule{0.1em}{0pt}{1pt}
        \#\textbf{U} & 152k & 432k & 13k & 5k & 9k & 35k & 25k & 19k & 32k \\
        \#\textbf{I} & 60k & 83k & 4k & 2k & 5k & 13k & 5k & 4k & 5k \\
        \#\textbf{R} & 1.3m & 3.9m & 72k & 27k & 51k & 210k & 134k & 110k & 191k \\
\specialrule{0.1em}{1pt}{1pt}
\end{tabular}}
\label{tab:datasets}
\end{table}
For data preprocessing, we retained users with more than three interactions. For the pre-training phase of the collaborative filtering (CF) model, we adopted the leave-one-out strategy for data splitting. For the click-through rate (CTR) prediction task, we converted explicit ratings into binary feedback: ratings of 4 or higher were treated as positive interactions (i.e., clicks), while ratings below 4 were regarded as non-interactions. Following the setup of prior research~\cite{multiemb}, the dataset for the CTR task was partitioned into training, validation, and test sets in an 8:1:1 ratio.
\newcommand{\myhline}{\noalign{\vskip 1pt}\hline\noalign{\vskip 1pt}}
\begin{table*}[htpb]
\centering
\caption{Results on all datasets, where the best results are marked in bold. Note that $^{*}$ indicates a significance level of $p\leq 0.05$ based on a two-sample t-test between our method and the best baseline.}
\resizebox{1.0\textwidth}{!}{
\begin{tabular}{c|c|ccc|cccc|ccccc}
\specialrule{0.12em}{1pt}{0pt}
    \multicolumn{2}{c|}{\textbf{Dataset}} & \multicolumn{3}{c|}{\textbf{\textsf{2-Market}}} & \multicolumn{4}{c|}{\textbf{\textsf{3-Market}}} & \multicolumn{5}{c}{\textbf{\textsf{4-Market}}} \\
    \hline
    \multicolumn{2}{c|}{\textbf{Method}} & uk & us & Overall & es & jp & mx & Overall & ca & de & fr & it & Overall\\
    \specialrule{0.12em}{0pt}{1pt}
    \multirow{6}{*}{DNN} 
        & Local & 0.7916  & 0.7941  & 0.7929  & 0.8667  & 0.8714  & 0.8888  & 0.8756  & 0.8627  & 0.9225  & 0.9191  & 0.9045  & 0.9022  \\
        & FedAvg & 0.7913  & 0.7942  & 0.7928  & 0.8535  & 0.8739  & 0.8897  & 0.8724  & 0.8498  & 0.9205  & 0.9125  & 0.9029  & 0.8964  \\
        & FedPer & 0.7920  & 0.7931  & 0.7926  & 0.8581  & 0.8693  & 0.8902  & 0.8725  & 0.8454  & 0.9295  & 0.9216  & 0.9047  & 0.9003  \\
        & FedProx & 0.7919  & 0.7946  & 0.7933  & 0.8562  & 0.8604  & 0.8775  & 0.8647  & 0.8541  & 0.9207  & 0.9188  & 0.9024  & 0.8990  \\
        & pFedPara & 0.7902  & 0.7925  & 0.7914  & 0.8618  & 0.8747  & 0.8803  & 0.8723  & 0.8487  & 0.9086  & 0.9090  & 0.9019  & 0.8921 \\
        & FedMM & 0.7934  & 0.7957 & \textbf{0.7946*}  & 0.8736  & 0.8741  & 0.8935  & \textbf{0.8804*}  & 0.8652  & 0.9271  & 0.9240  & 0.9118  & \textbf{0.9070*}  \\
    \myhline
    \multirow{6}{*}{DeepFM} 
        & Local & 0.7915  & 0.7943  & 0.7929  & 0.8661  & 0.8726  & 0.8863  & 0.8750  & 0.8635  & 0.9222  & 0.9221  & 0.9047  & 0.9031  \\
        & FedAvg & 0.7917  & 0.7937  & 0.7927  & 0.8568  & 0.8675  & 0.8909  & 0.8717  & 0.8418  & 0.9169  & 0.9107  & 0.8919  & 0.8903  \\
        & FedPer & 0.7923  & 0.7931  & 0.7927  & 0.8598  & 0.8653  & 0.8895  & 0.8715  & 0.8388  & 0.9190  & 0.9099  & 0.8999  & 0.8919  \\
        & FedProx & 0.7916  & 0.7947  & 0.7932 & 0.8485  & 0.8961  & 0.8856  & 0.8767  & 0.8622  & 0.8991  & 0.8971  & 0.8866  & 0.8863  \\
        & pFedPara & 0.7902  & 0.7926  & 0.7914  & 0.8623  & 0.8740  & 0.8758  & 0.8707  & 0.8510  & 0.9127  & 0.9070  & 0.8912  & 0.8905 \\
        & FedMM & 0.7933  & 0.7961  & \textbf{0.7947*}  & 0.8755  & 0.8722  & 0.8956  & \textbf{0.8811*}  & 0.8612  & 0.9310  & 0.9254  & 0.9133  & \textbf{0.9077*}  \\
    \myhline
    \multirow{6}{*}{DCN} 
        & Local & 0.7916  & 0.7942  & 0.7929  & 0.8659  & 0.8736  & 0.8893  & 0.8763  & 0.8619  & 0.9225  & 0.9173  & 0.9079  & 0.9024  \\
        & FedAvg & 0.7916  & 0.7936  & 0.7926  & 0.8593  & 0.8703  & 0.8777  & 0.8691  & 0.8572  & 0.9256  & 0.9225  & 0.9037  & 0.9023  \\
        & FedPer & 0.7920  & 0.7929  & 0.7925  & 0.8709  & 0.8704  & 0.8815  & 0.8743  & 0.8582  & 0.9266  & 0.9167  & 0.9016  & 0.9008  \\
        & FedProx & 0.7919  & 0.7941  & 0.7930  & 0.8626  & 0.8794  & 0.8942  & 0.8787  & 0.8641  & 0.9280  & 0.9223  & 0.9071  & 0.9054  \\
        & pFedPara & 0.7900  & 0.7928  & 0.7914  & 0.8614  & 0.8689  & 0.8766  & 0.8690  & 0.8564  & 0.9040  & 0.9093  & 0.8982  & 0.8920 \\
        & FedMM & 0.7933  & 0.7957  & \textbf{0.7945*}  & 0.8698  & 0.8763  & 0.8985  & \textbf{0.8815*}  & 0.8680  & 0.9294  & 0.9217  & 0.9139  & \textbf{0.9083*}  \\
\specialrule{0.12em}{1pt}{1pt}
\specialrule{0.12em}{0pt}{1pt}
    \multirow{4}{*}{MMR}
        & MAML & 0.7916  & 0.7942  & 0.7929  & 0.8669  & 0.8764  & 0.8951  & 0.8795  & 0.8625  & 0.9257  & 0.9176  & 0.9076  & 0.9034  \\
        & FOREC & 0.7928  & 0.7949  & 0.7939  & 0.8680  & 0.8839  & 0.8939  & \textbf{0.8819}  & 0.8610  & 0.9284  & 0.9206  & 0.9127  & 0.9057  \\
        & MA & 0.7914  & 0.7939  & 0.7927  & 0.8667  & 0.8780  & 0.8933  & 0.8793  & 0.8638  & 0.9272  & 0.9261  & 0.9066  & 0.9059 \\
        & FedMM(DNN) & 0.7934  & 0.7957 & \textbf{0.7946}  & 0.8736  & 0.8741  & 0.8935  & 0.8804  & 0.8652  & 0.9271  & 0.9240  & 0.9118  & \textbf{0.9070*} \\
\specialrule{0.12em}{1pt}{1pt}
\end{tabular}}
\label{tab:main_result}
\end{table*}

\subsubsection{Metrics.}
We adopt the area under the ROC curve (AUC) as our evaluation metric, a widely used measure for assessing the performance of CTR prediction models. A higher AUC value indicates better performance. Notably, an improvement of more than 0.1\% in AUC is considered significant for the CTR prediction task~\cite{retrieval,ssim}.
\subsubsection{Backbones \& Baselines.}
To validate the generalization capability of FedMM, we integrate it with three backbone models: DNN~\cite{widedeep}, DeepFM~\cite{DeepFM}, and DCN~\cite{DCN}. We then comprehensively evaluate the performance of our proposed FedMM by comparing it against a diverse set of baselines spanning three categories: single-market local models, federated learning algorithms, and multi-market recommendation approaches. It is important to note that the multi-market baselines listed below assume centralized data availability and do not account for cross-market data privacy.

\noindent\textbf{Single-Market Local Models:}
\begin{itemize}[leftmargin=*]
\item \textbf{DNN}~\cite{widedeep}: A standard deep neural network that utilizes fully connected layers to capture non-linear interactions between user and item features.
\item \textbf{DeepFM}~\cite{DeepFM}: A hybrid model that integrates a factorization machine (FM) component for low-order interactions and a deep neural network for high-order feature interactions in an end-to-end manner.
\item \textbf{DCN}~\cite{DCN}: A model designed to efficiently learn bounded-degree feature interactions using a specialized cross network alongside a deep network, eliminating the need for manual feature engineering.
\end{itemize}
\textbf{Federated Learning Methods:}
\begin{itemize}[leftmargin=*]
\item \textbf{FedAvg}~\cite{fedavg}: The standard federated learning algorithm that aggregates local model updates via weighted averaging to derive a shared global model.
\item \textbf{FedPer}~\cite{fedper}: A personalized federated learning framework that decouples the model into shared base layers (aggregated globally) and personalized head layers (kept local), allowing adaptation to local data distributions.
\item \textbf{FedProx}~\cite{fedprox}: An optimization-based federated learning method designed to tackle statistical heterogeneity. It introduces a proximal term to the local objective function to limit the divergence of local updates from the global model.
\item \textbf{pFedPara}~\cite{fedpara}: A communication-efficient personalized federated learning approach that parameterizes layer weights using a low-rank Hadamard product. It decouples each layer's parameters into a global inner matrix aggregated on the server to learn shared knowledge and a local inner matrix kept on the client to capture specific personal features.
\end{itemize}
\textbf{Multi-Market Recommendation Methods:}
\begin{itemize}[leftmargin=*]
\item \textbf{MAML}~\cite{maml}: A model-agnostic meta-learning (MAML) approach that learns a set of highly improved initialization parameters, enabling the model to quickly adapt to specific market tasks with minimal gradient updates.
\item \textbf{FOREC}~\cite{xmarket}: An advancement over MAML that extends the meta-learning framework. It initializes the model with meta-learned parameters and subsequently appends market-specific layers for fine-tuning, thereby better capturing distinct market distributions.
\item \textbf{MA}~\cite{ma}: A method that models distinct markets by introducing market-specific embeddings, which project universal item representations into market-dependent vector spaces.
\end{itemize}

\subsubsection{Implementation Details.}
Regarding general hyper-parameter settings, all input features are discretized and mapped into a $16$-dimensional embedding space. The hidden layer sizes of the multi-layer perceptron (MLP) are fixed at $[512, 256, 128]$. We employ the Adam optimizer for training, with the learning rate tuned from the set \{1e-5, 5e-5, 1e-4, 5e-4, 1e-3, 5e-3\} and the \(l_2\) regularization coefficient selected from \{1e-6, 5e-6, 1e-5, 5e-5, 1e-4, 5e-4\}. We employ LightGCN~\cite{lightgcn} as our pre-trained collaborative filtering backbone. For the objective function, we empirically set \(\lambda=1\) to maintain a balanced magnitude between the reconstruction loss \(\mathcal{L}_{rec}\) and the quantization loss \(\mathcal{L}_{RQ}\). Regarding the baselines, we use publicly available open-source implementations to ensure reproducibility. For all federated learning methods, local differential privacy (LDP) is enforced by injecting noise sampled from \(Laplace(0, 0.001)\) into the parameters during the upload phase. Specifically for FedPer~\cite{fedper}, we designate the final output layer as the personalized component, while the preceding layers serve as the shared global representation. We set the proximal term coefficient \(\mu =0.1\) for FedProx~\cite{fedprox}, and the compression ratio to 0.5 for pFedPara~\cite{fedpara}. To ensure a fair comparison, all baseline methods are optimized via grid search within the same hyperparameter search space.

\subsection{RQ1: Overall Performance}\label{sec:experiments:rq1}
In this subsection, we integrate FedMM into DNN, DeepFM, and DCN to observe the performance improvements it brings. We also compare FedMM with representative federated learning algorithms and MMR methods. The overall performance of FedMM and the baseline methods is presented in Table~\ref{tab:main_result}. We have the following observations:
\begin{itemize}[leftmargin=*]
\item FedMM outperforms Local models across all datasets (2-Market, 3-Market, 4-Market) and backbone architectures (DNN, DeepFM, DCN). For instance, in the 4-Market scenario with DCN, FedMM achieves an AUC of 0.9083 compared to the Local model's 0.9024. By leveraging shared knowledge from other markets, FedMM enriches the local market's representational capabilities, leading to better generalization. 

\item FedMM significantly surpasses standard FL methods (FedAvg, FedProx, FedPer, pFedPara). Notably, we observe that FedAvg often performs worse than Local models, indicating a severe ``Negative Transfer'' phenomenon. While FedPer and FedProx mitigate this drop to some extent, they still fail to consistently exceed Local performance. This indicates that under severe market heterogeneity, directly aggregating model parameters across diverse markets introduces conflicting information, leading to the loss of market-specific features. Additionally, the limited expressive capacity inherent in pFedPara's low-rank-based parameterization results in a significant performance decline. In contrast, FedMM injects global collaborative signals into local models as complementary knowledge rather than a substitute. This architectural advantage empowers each market to selectively utilize beneficial information from the federation while preserving its local characteristics. Consequently, FedMM achieves a superior trade-off between global collaboration and local traits, avoiding the interference issues in standard FL methods.

\item FedMM achieves performance that is comparable to, and in many cases superior to, state-of-the-art centralized MMR methods. It is important to emphasize that MMR methods benefit from centralized training, enabling simultaneous access to all market data to optimize the joint distribution. FedMM achieves centralized-level performance without sharing raw data. It successfully captures higher-order cross-market correlations that are typically accessible only to centralized models, proving that privacy protection need not come at the cost of model performance.
\end{itemize}

\noindent \textbf{Communication Efficiency.}
We further analyze the communication efficiency of FedMM compared to standard federated learning approaches.
\begin{itemize}[leftmargin=*]
\item \textbf{Standard FL}: They are inherently burdened by the architecture of the deep recommendation backbone. These methods require synchronizing dense neural network layers, where the number of parameters scales with the model's depth and width. Even split-learning variants like FedPer still necessitate transmitting substantial shared bottom layers. Although pFedPara reduces per-round communication cost by transmitting low-rank parameters, it typically requires more rounds to converge, resulting in a higher total communication burden.
\item \textbf{FedMM}: FedMM adopts a lightweight communication strategy that completely decouples transmission from the backbone model size. Instead of uploading heavy model gradients, FedMM transmits only the compact federated codebook.
\end{itemize}

\begin{figure}[htbp] 
\centering
    \includegraphics[width=0.70\linewidth]{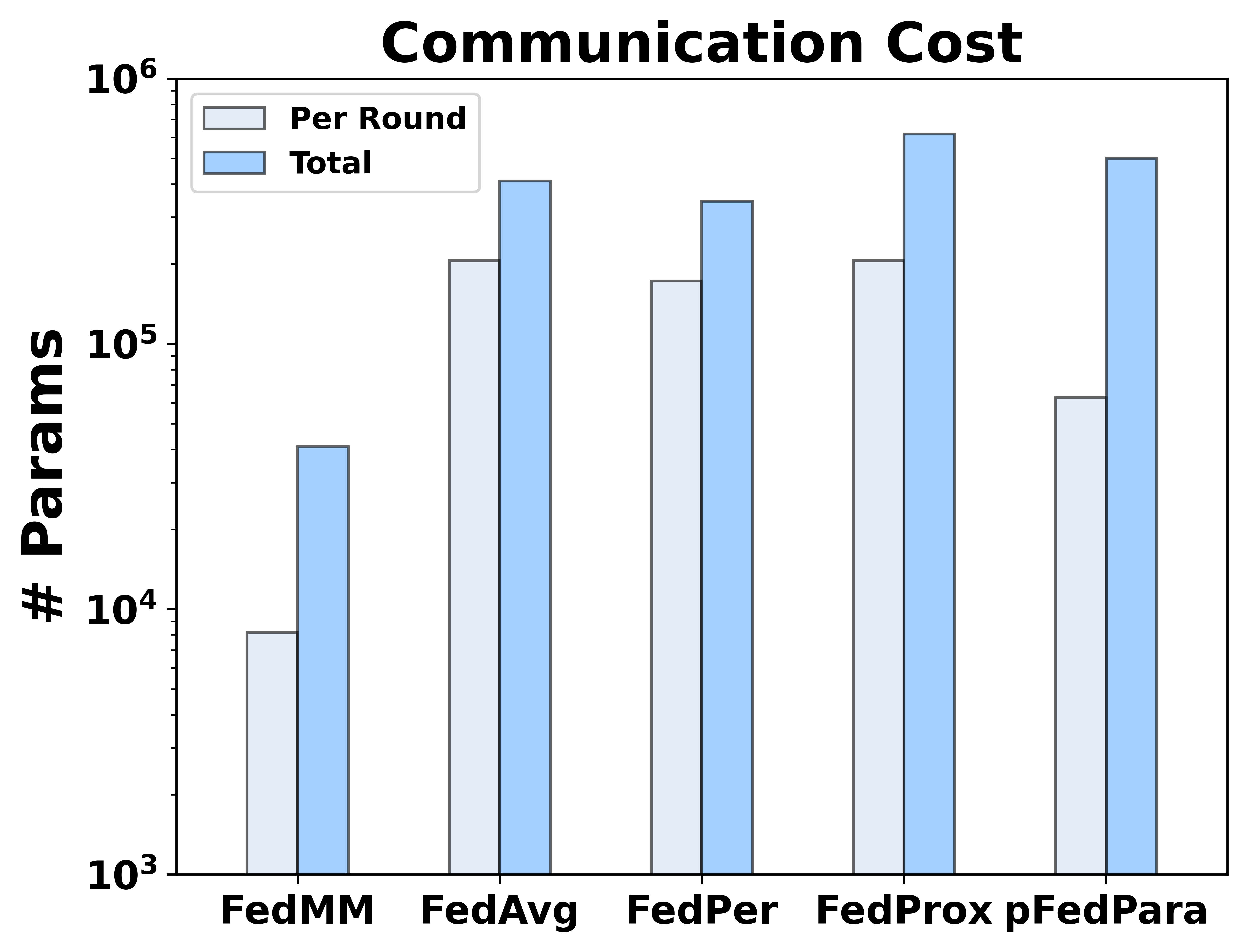}
    \Description[<Figure 3. Fully described in the text.>]{<A full description of Figure 3 can be found in the last paragraph of Section 6.2.>}
\caption{Communication cost comparison measured by the number of transmitted parameters per regional node (Log Scale). The chart contrasts the communication overhead per round (light blue) and for the total training process (dark blue).}
\label{fig:cost}
\end{figure}

This efficiency gap is clearly visualized in Figure~\ref{fig:cost}. The baseline methods incur high communication costs, reaching hundreds of thousands of parameters per round due to transmitting dense MLP layers. FedMM maintains a significantly lower profile. By avoiding transmission of the deep backbone entirely, FedMM reduces the communication payload by over 20$\times$ compared to the model-sharing baselines. This confirms that FedMM achieves superior efficiency by exchanging concise prototypes rather than heavy model weights.

\subsection{RQ2: Ablation Study}\label{sec:experiments:rq2}
To investigate the efficacy of the key components in FedMM, we construct three variants for comparison:
1) \textbf{w/o Local}. Here, the RQ-VAE is fully derived from federated training, in which all markets share a unified global codebook set, without preserving market-specific local codebooks. This setting tests the necessity of maintaining local characteristics in the face of market heterogeneity.
2) \textbf{w/o Global}. In this variant, the RQ-VAE component is trained exclusively on local data without any federated aggregation. This setting isolates the architectural benefit of the discrete variational structure from the collaborative knowledge transfer.
3) \textbf{Random}. A sanity check where the codebook embeddings are randomly initialized and frozen during training, validating whether the learned discrete representations effectively capture the collaborative signal.

As presented in Table~\ref{tab:ablation}, FedMM consistently outperforms all ablated variants across different datasets and backbone models, validating the synergy of its design. The performance degradation observed in the `w/o Global' variant underscores the critical role of cross-market collaboration; relying solely on local data limits the model's ability to generalize, particularly in scenarios plagued by data sparsity. Conversely, the `w/o Local' variant also exhibits suboptimal performance compared to FedMM. This decline suggests that the global codebook, while effective at capturing universal patterns, functions primarily as a coarse-grained reference. Without the local codebook to refine these general signals and capture market-specific residuals, the model lacks sufficient resolution to model the intricate user preferences unique to each market. Furthermore, the significant performance drop in the `Random Codebook' variant confirms that the improvement stems from the learned meaningful semantic prototypes in the codebook rather than mere structural complexity. In conclusion, FedMM achieves the optimal balance by injecting global collaborative signals to augment the representation capability of local models while retaining local features to prevent negative transfer, a trade-off that neither purely local nor purely global approaches can satisfy.

\begin{table}[htpb]
\centering
\footnotesize
\caption{Ablation Analysis on our FedMM, where the best results are marked in bold.}
\resizebox{0.45\textwidth}{!}{
\begin{tabular}{c|c|c|c|c}
\specialrule{0.12em}{1pt}{0pt}
    \multicolumn{2}{c|}{\textbf{Dataset}} & \multicolumn{1}{c|}{\textbf{\textsf{2-Market}}} & \multicolumn{1}{c|}{\textbf{\textsf{3-Market}}} & \multicolumn{1}{c}{\textbf{\textsf{4-Market}}} \\
    \specialrule{0.12em}{0pt}{1pt}
    \multirow{4}{*}{DNN} 
        & FedMM & \textbf{0.7946} & \textbf{0.8804} & \textbf{0.9070}  \\
        & w/o Local & 0.7937 & 0.8744 & 0.9039  \\
        & w/o Global & 0.7936 & 0.8755 & 0.9029  \\
        & Random & 0.7927 & 0.8713 & 0.8981  \\
    \myhline
    \multirow{4}{*}{DeepFM} 
        & FedMM & \textbf{0.7947} & \textbf{0.8811} & \textbf{0.9077}  \\
        & w/o Local & 0.7937 & 0.8787 & 0.9033  \\
        & w/o Global & 0.7927 & 0.8735 & 0.8990  \\
        & Random & 0.7928 & 0.8728 & 0.8987  \\
    \myhline
    \multirow{4}{*}{DCN} 
        & FedMM & \textbf{0.7945} & \textbf{0.8815} & \textbf{0.9083}  \\
        & w/o Local & 0.7935 & 0.8735 & 0.9073 \\
        & w/o Global & 0.7934 & 0.8754 & 0.9071  \\
        & Random & 0.7925 & 0.8718 & 0.9009  \\
\specialrule{0.12em}{1pt}{1pt}
\end{tabular}}
\label{tab:ablation}
\end{table}

\subsection{RQ3: Parameter Sensitivity Analysis}
Next, we provide more analysis and discussion on the impact of codebook size \(T\) and noise scale \(b\) in FedMM.

\subsubsection{Impact of Codebook Size \(T\)} To investigate the impact of the discrete representation capacity on model performance, we conduct a sensitivity analysis on the codebook size \(T\), varying it within the range of \{64, 128, 256, 512\}. As illustrated in Figure~\ref{fig:hyper}, we observe a consistent trend across all datasets and backbone architectures: the recommendation performance, measured by overall AUC, initially improves as \(T\) increases. Specifically, increasing the codebook size from 64 to 256 yields significant performance gains. This trajectory indicates that a larger codebook size enhances the expressiveness of the quantized representations, enabling the model to capture more fine-grained semantic clusters and subtle cross-market collaborative patterns that are otherwise lost with a smaller vocabulary. However, performance hits a saturation point and begins to degrade or plateau when \(T\) is further increased to 512. This decline suggests that an excessively large codebook space may lead to codebook collapse or overfitting, in which the model captures market-specific noise rather than generalizable features, or to sparse code utilization that hinders effective knowledge sharing. Consequently, we select \(T\)=256 as the optimal setting for our experiments, as it strikes the best balance between representation granularity and generalization robustness.

\begin{figure}[htbp] 
\centering
    \includegraphics[width=1.0\linewidth]{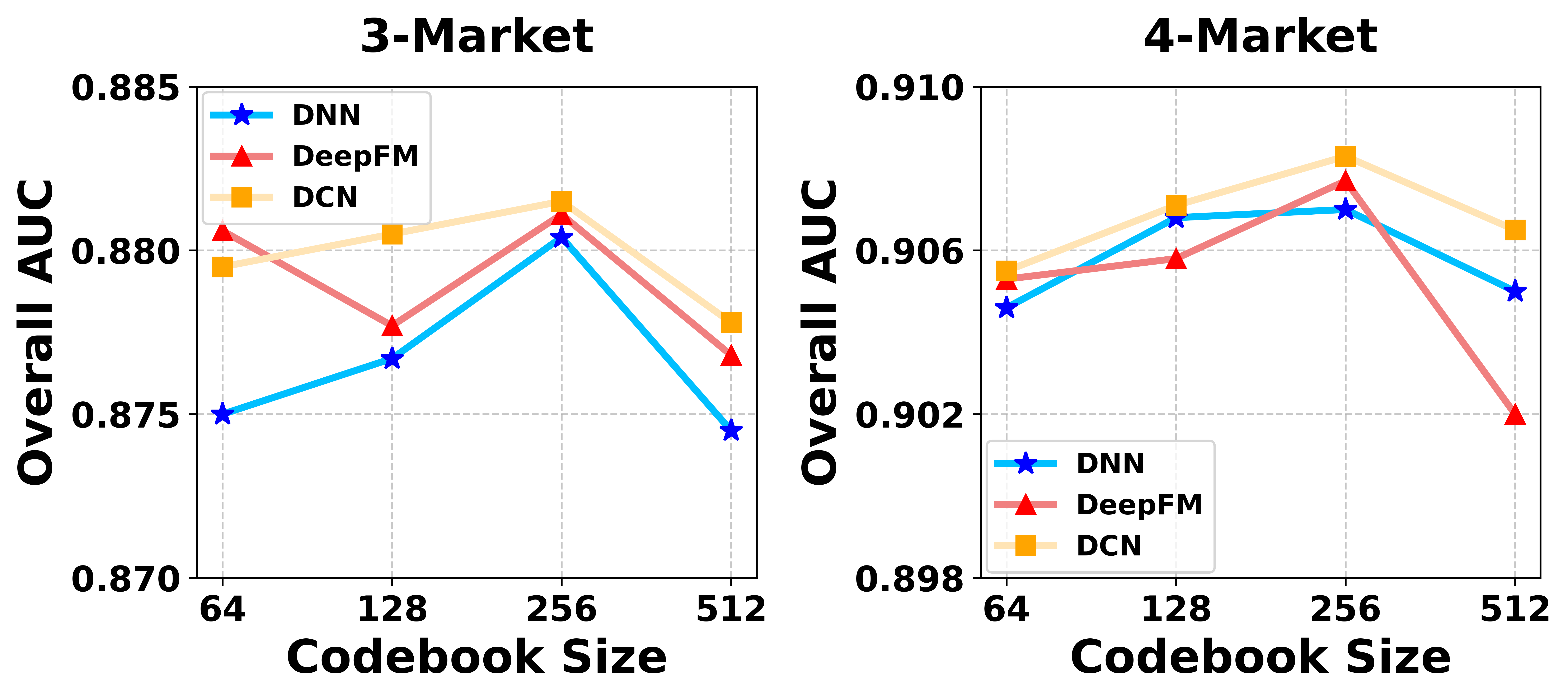}
    \Description[<Figure 4. Fully described in the text.>]{<A full description of Figure 4 can be found in Section 6.4.1.>}
\caption{Sensitivity analysis of the codebook size \(T\) on two datasets. The colored curves represent the performance trends of FedMM when integrated with different backbone architectures (blue represents DNN, red represents DeepFM, and yellow represents DCN).}
\label{fig:hyper}
\end{figure}

\subsubsection{Impact of Noise Scale \(b\)}
The noise scale \(b\) directly controls the magnitude of the noise added to the uploaded client parameters in local differential privacy (LDP). We evaluate the impact of the Laplace noise scale \(b\) on model performance. As illustrated in Figure~\ref{fig:noise}, a clear trade-off exists between privacy protection and recommendation utility. Across all models and market scenarios, increasing the noise scale b from 0.001 to 0.5 consistently degrades the Overall AUC due to the higher variance introduced by stronger privacy guarantees. Nevertheless, this performance degradation remains relatively controlled. Our framework maintains competitive accuracy even under strict privacy settings, demonstrating its effectiveness in balancing privacy and utility. 
\begin{figure}[htbp] 
\centering
    \includegraphics[width=1.0\linewidth]{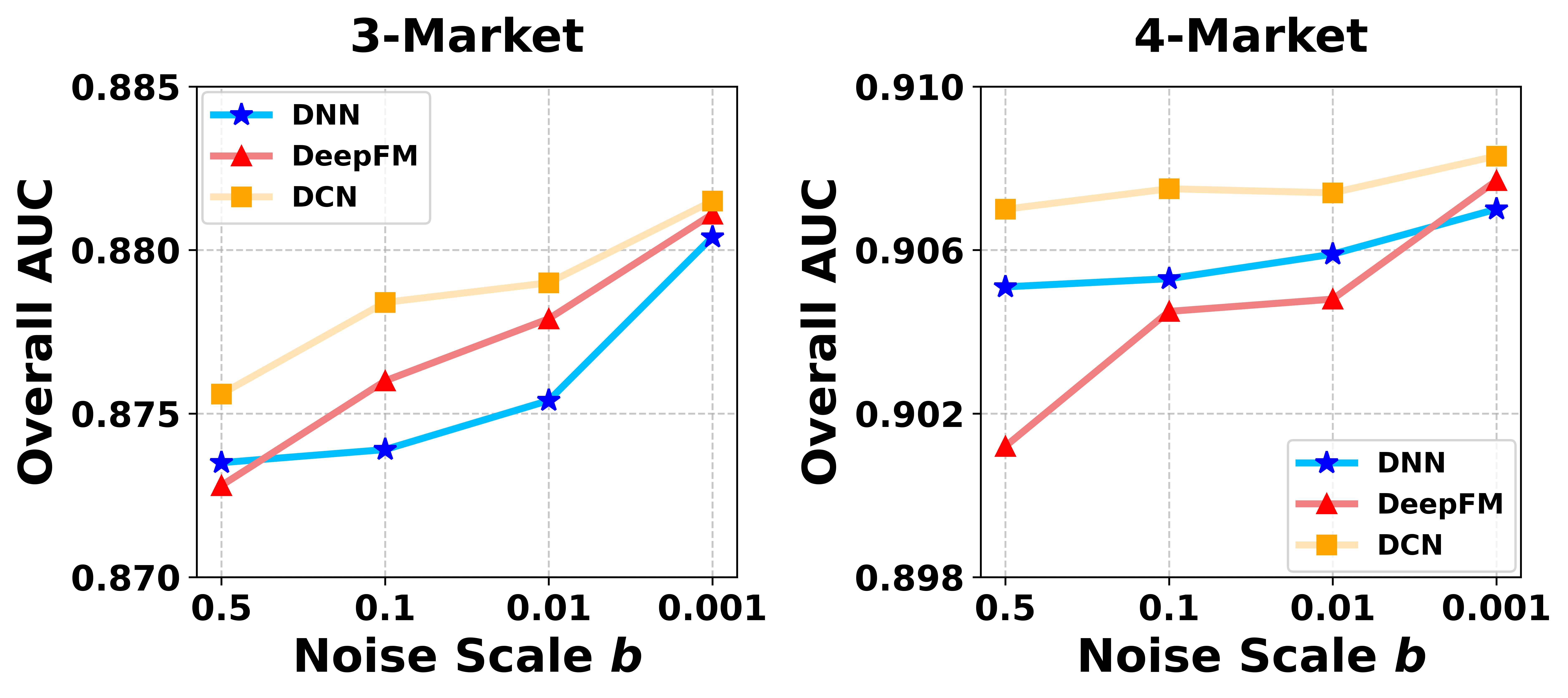}
    \Description[<Figure 5. Fully described in the text.>]{<A full description of Figure 5 can be found in Section 6.4.2.>}
\caption{Privacy-utility analysis on two datasets (relative to noise scale \(b\)). The colored curves represent the performance trends of FedMM when integrated with different backbone architectures (blue represents DNN, red represents DeepFM, and yellow represents DCN).}
\label{fig:noise}
\end{figure}

\subsection{RQ4: In-depth Analysis}
Finally, to intuitively understand how FedMM bridges diverse markets while preserving local characteristics, we visualize the learned latent spaces using t-SNE. As illustrated in Figure~\ref{fig:visual}, the user embeddings from different markets exhibit distinct clustering patterns with only partial overlaps, visually confirming the severe non-independent identically distributed (non-IID) nature and statistical heterogeneity inherent in cross-market recommendations. Crucially, the global codebook vectors are not isolated or collapsed into a single region; instead, they are predominantly concentrated in the central intersection region where diverse market manifolds converge. This spatial arrangement demonstrates that the global codebook successfully captures the universal semantic prototypes shared among diverse markets. By acting as ``connective hubs,'' these codebook vectors bridge the distributional gaps between markets, enabling the model to transfer high-level collaborative signals (the red triangles) to local clients while allowing each market to maintain its distinct manifold (the colored clusters). This confirms our hypothesis that the Federated Codebook in FedMM effectively extracts the global collaborative signal. Furthermore, the embeddings of local users across different markets exhibit a relatively dispersed pattern, whereas those within the same market show a clear clustering pattern. This further illustrates the rationality and necessity of the hierarchical codebook design.

\begin{figure}[htbp] 
\centering
    \includegraphics[width=1.0\linewidth]{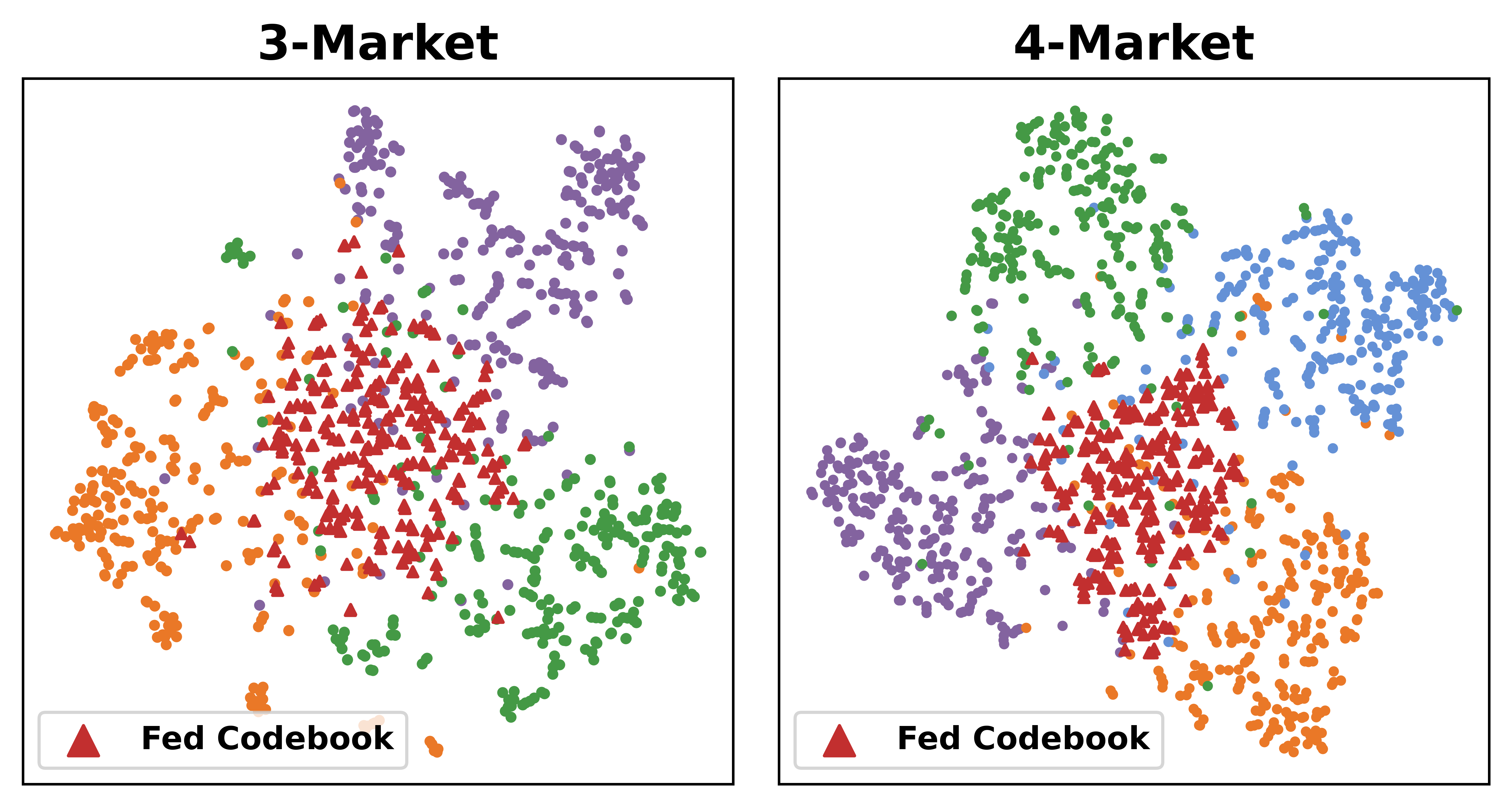}
    \Description[<Figure 6. Fully described in the text.>]{<A full description of Figure 6 can be found in Section 6.5.>}
\caption{t-SNE visualization of the latent space. The scatter points of different colors represent the local user embeddings from distinct markets, illustrating market heterogeneity. The red triangles represent the shared global codebook embeddings learned by FedMM, which are distributed across clusters to capture common collaborative patterns.}
\label{fig:visual}
\end{figure}

%% file: section/conclusion.tex
\section{Conclusions}\label{sec:conclusions}
In this paper, we propose FedMM, a novel federated framework designed for multi-market recommendation. FedMM employs a dual-codebook architecture, utilizing a Federated Codebook to extract universal collaborative prototypes shared across markets, while leveraging Local Codebooks to preserve the fine-grained characteristics of specific markets. This design enables local clients to incorporate hierarchical collaborative signals into discrete token representations, which serve as augmented features to significantly improve the CTR prediction performance.
Our codebook-based transmission strategy inherently prevents the leakage of raw data and individual user features, while substantially reducing communication overhead. Extensive experiments on three real-world datasets demonstrate that FedMM not only outperforms state-of-the-art federated baselines, but also achieves performance comparable to centralized training paradigms.

%% file: section/acknowledgement.tex
\begin{acks}
This work was supported by the National Natural Science Foundation
of China (Nos. 62302310, 62272315) and the SZTU University Research
Project (Grant No. 20251061020002).
\end{acks}